\newif\ifmarkup
  \newcommand*\@patchAmsLineno[1]{%
    \expandafter\let\csname old#1\expandafter\endcsname\csname #1\endcsname
    \expandafter\let\csname oldend#1\expandafter\endcsname\csname end#1\endcsname
    \renewenvironment{#1}%
      {\linenomath\csname old#1\endcsname}%
      {\csname oldend#1\endcsname\endlinenomath}}%
  \newcommand*\@patchAmsLinenoBoth[1]{\@patchAmsLineno{#1}\@patchAmsLineno{#1*}}%
  \newcommand{\rev}[1]{\textcolor{blue}{#1}}
  \newcommand{\revblue}{\color{blue}}
  \newcommand{\rev}[1]{#1}%
  \newcommand{\revblue}{}%
\newlength{\figwidth}
\newcommand{\dfn}{\triangleq}
\newcommand{\given}{\mid}
\DeclareMathOperator{\sign}{sgn}
\DeclareMathOperator{\expect}{\mathbb{E}}
\newcommand{\meas}{\ensuremath{\mathcal{Y}}}
\let\abs=\envert
\newtheorem{exmp}{Example}[section]
\newtheorem{theorem}{Theorem}[section]
\newtheorem{lemma}[theorem]{Lemma}
\title{Bang-Bang Evasion: Its Stochastic Optimality \\ and a Terminal-Set-Based Implementation}
\author{Liraz Mudrik\footnote{\rev{This work is part of Dr.\ Mudrik's
      Doctoral Research, Stephen B.\ Klein Faculty of Aerospace
      Engineering; currently Postdoctoral Fellow, Department of Mechanical
      and Aerospace Engineering, Naval Postgraduate School, Monterey, CA
      93943; \texttt{liraz109@gmail.com}. Member AIAA (Corresponding
      Author).}}}
\author{Yaakov Oshman\footnote{Professor Emeritus, Stephen B.\ Klein
    Faculty of Aerospace Engineering; \texttt{yaakov.oshman@technion.ac.il}.
    Fellow AIAA.}}
\affil{Technion---Israel Institute of Technology, Haifa, 3200003, Israel}
\begin{document}

\maketitle

\ifmarkup\linenumbers\fi

\renewcommand{\include}{\input}

\begin{abstract}
We address the problem of optimal evasion in a planar endgame engagement, where a target with bounded lateral acceleration seeks to avoid interception by a missile guided by a linear feedback law.
Contrary to existing approaches, that assume perfect information or use heuristic maneuver models in stochastic settings, we formulate the problem in an inherently stochastic framework involving imperfect information and bounded controls.  
Complying with the generalized separation theorem, the control law factors in the posterior distribution of the state.  
\rev{We extend} the well-known optimality of bang-bang evasion maneuvers in deterministic settings to the realm of realistic, stochastic evasion scenarios\rev{. First, we prove} that an optimal evasion strategy always exists, and that the set of optimal solutions includes at least one bang-bang policy, rendering the resulting optimal control problem finite-dimensional.
\rev{Second, leveraging this structure, we propose} the closed-loop terminal-set-based evasion (TSE) strategy, and demonstrate its effectiveness in simulation against a proportional navigation
pursuer.
Monte Carlo simulations show that the TSE strategy outperforms traditional stochastic evasion strategies based on random telegraph, Singer, and weaving models.
\end{abstract}


{\revblue
\section*{Nomenclature}
\noindent\begin{tabular}{@{}l@{\ \ =\ \ }p{0.72\columnwidth}@{}}
$\mathbf{x}$ & engagement state vector \\
$\xi,\ \dot{\xi}$ & relative displacement and velocity normal to the initial LOS \\
$\mathbf{q}_{M},\ \mathbf{q}_{T}$ & interceptor and target internal-dynamics state vectors \\
$\grave{a}_{M},\ \grave{a}_{T}$ & interceptor and target accelerations normal to the initial LOS \\
$u_{T},\ u_{T}^{\max}$ & target acceleration command and its bound \\
$u_{M},\ \hat{u}_{M}$ & interceptor acceleration command and its target-side estimate \\
$\mathbf{F}^{k},\ \mathbf{g}_{M},\ \mathbf{g}_{T}$ & discrete transition and input matrices \\
$\mathbf{K}^{k}$ & interceptor guidance-gain matrix \\
$\mathbf{C}$ & terminal output-selection matrix \\
$f,\ t_{go}^{k}$ & terminal step and time-to-go, both random \\
$p_{f}(\cdot)$ & probability mass function of the terminal step \\
$P^{k}_{j}$ & mode probability of guidance law $p_{j}$ \\
$z,\ \mathcal{N}$ & zero-effort miss and navigation gain \\
$\mathbf{a}^{n}(i,j),\ \mathbf{z}^{n}_{i,j}$ & input-to-terminal gain and command-independent terminal term \\
$\boldsymbol{\mu}^{n}_{i,j},\ \boldsymbol{\Sigma}^{n}_{i,j}$ & mean and covariance of $\mathbf{z}^{n}_{i,j}$ \\
$S_{n}$ & terminal-set shaping function \\
$g(\cdot),\ J$ & terminal cost and performance index \\
$\eta$ & uncertain interceptor-dynamics parameters \\
$\tau_{M},\ \tau_{T}$ & interceptor and target dynamics time constants \\
$\omega^{k},\ \nu^{k}$ & process and measurement noise \\
$\Delta t$ & sampling interval \\
\end{tabular}
\par}

\section{Introduction}

At the terminal phase of an interception scenario, a defending aerial vehicle must perform evasive maneuvers to avoid an attacking missile. This work considers such engagements in the commonly used planar lateral engagement, where the interceptor uses a state-feedback guidance law to home in on a maneuvering target. The optimal design of interceptor guidance laws has been extensively studied under deterministic settings, leading to canonical strategies such as proportional navigation (PN), augmented PN (APN), the optimal guidance law (OGL)~\cite{zarchan_tactical_2012}, and the linear-quadratic differential game (LQDG) guidance law~\cite{ben-asher_advances_1998}. These strategies arise from linearized engagement dynamics and quadratic cost functionals and thus induce linear acceleration commands.

From the evader's perspective, the optimal maneuver against such linear guidance laws, under bounded acceleration constraints, has been shown to be of bang-bang form, i.e., switching between the minimal and maximal allowable control inputs. This structure was first demonstrated for evasion from PN-guided missiles in both two-dimensional~\cite{shinar_analysis_1977,ben-asher_optimal_1989} and three-dimensional~\cite{shinar_analysis_1979} settings. Later, Shima~\cite{shima_optimal_2011} generalized the result to cover a broad class of linear guidance laws, including PN, APN, and OGL, showing that the evader's optimal response remains bang-bang. These findings were extended to more realistic scenarios by incorporating estimator-based inference. In particular, Fonod and Shima~\cite{fonod_multiple_2016} employed a multiple model adaptive estimator (MMAE)~\cite{shima_efficient_2002} to identify the interceptor's guidance law from a known library and applied the corresponding bang-bang optimal evasion policy. Subsequently, Turetsky and Shima~\cite{turetsky_target_2016} showed that the bang-bang form persists even when the interceptor switches its guidance law mid-engagement, regardless of whether the switching time is known to the target. Later, Shaferman~\cite{shaferman_near-optimal_2021} demonstrated that bang-bang maneuvers remain optimal even when the target exploits estimation delays in the interceptor's guidance loop~\cite{hexner_temporal_2008}.
Recent works further highlight the prevalence of structured optimal evasion in adversarial and uncertain settings. In particular,~\cite{du_three-dimensional_2025} analyzed three-dimensional bang-bang maneuvers without knowledge of the interceptor's guidance law based on the kinematics and engagement geometry, while~\cite{hou_optimal_2025} derived closed-form optimal evasive strategies against general linear-quadratic-optimal guidance laws. However, all of these studies remain deterministic and assume perfect information, in which both agents possess complete knowledge of their own and their opponent's states.

In realistic scenarios, the evader must instead estimate the interceptor's state through noisy measurements. This motivates a stochastic formulation of the evasion problem. When modeling such systems, particularly under bounded controls and potentially non-Gaussian uncertainties, it is appropriate to adopt the generalized separation theorem (GST), originally stated by Witsenhausen~\cite{witsenhausen_separation_1971} and derived from Striebel's sufficiency results~\cite{striebel_sufficient_1965}. The GST allows for the estimator to be designed separately, whereas the controller uses the posterior distribution of the system state as computed by the estimator.

In stochastic evasion settings, where full information about the states is unavailable, evader behavior is often modeled using randomized or probabilistic control profiles. Commonly used examples include the random telegraph signal (RTS) model, where the target arbitrarily switches between the maximal and minimal acceleration commands~\cite{zarchan_representation_1979,zarchan_tactical_2012}, the Singer process~\cite{singer_estimating_1970}, in which acceleration follows a first-order Gauss–Markov process with finite correlation time, and weaving maneuvers where the evader performs sinusoidal motion~\cite{zarchan_tactical_2012}. 
Variants of the weaving model have been extensively studied, e.g., in~\cite{yanushevsky_analysis_2004,marks_multiple_2006,weiss_robust_2008,ratnoo_three-point_2016}, often in the context of assessing guidance system robustness or estimator adaptation. Although RTS, Singer, and weaving models are useful for evaluating interceptor performance, they are generally heuristic and do not yield provably optimal feedback evasion strategies under stochastic uncertainty.
Moreover, prior stochastic evasion models lacked a proof of optimal feedback form under uncertainty.

This work addresses that gap by studying the optimal evasion problem
for a bounded-control evader operating under uncertainty about the
engagement's state.  We formulate the problem in discrete time
following the GST lines, where a posterior distribution represents the
imperfect information about the states. Our first and main
contribution is a general existence result: we prove that an optimal
evasion policy exists, and that there is at least one such policy with
a bang-bang structure. This extends the well-known deterministic
result to the stochastic setting and establishes that the structure of
optimal evasive maneuvers remains extremal even under uncertainty.
Importantly, this result reduces the complexity of optimal evasion
synthesis from an infinite-dimensional control space to a finite one,
albeit still requiring a considerable computational cost.

Having proved the main, general result, we then use it to introduce
our second contribution, which is a specific evasion policy based on
the notion of terminal sets.  The idea of using such sets for guidance
problems was first presented in~\cite{shaviv_estimation-guided_2017},
where the pursuer's guidance law was devised using sequential Monte
Carlo (MC) techniques.  Later, \cite{mudrik_terminal-set-based_2023}
showed how this approach can be evaluated analytically in a
linearized-Gaussian setting.  This policy leverages the bang-bang
structure to select between the extremals at each time step based on
predicted terminal outcomes derived from state estimates.  We
demonstrate its efficacy in simulation against a proportional
navigation-guided pursuer and benchmark its performance against the
RTS, Singer, and weaving models commonly used in the stochastic
guidance literature.

The remainder of this paper is organized as follows.
Section~\ref{Sec:prob_form} formulates the problem and modeling
assumptions.  The first and main contribution of the paper is
introduced in Section~\ref{sec:result}, which formulates and proves
the existence theorem. Section~\ref{sec:terminal_set_evasion} presents
the second contribution of the paper, that is, the terminal-set-based
evasion (TSE) strategy with conventional stochastic models.
Section~\ref{sec:ts_numerical_eval} presents simulation results of the
TSE, including comparison with the RTS, Singer, and weaving evasion
models.  Concluding remarks are offered in Section~\ref{sec:conc}.

\section{Problem Formulation}
\label{Sec:prob_form}
A linearized, pursuer-evader stochastic engagement scenario is considered. The problem is mathematically formulated in the sequel.
\subsection{Linearized Kinematics and Dynamics}
Figure~\ref{fig:Planar-engagement-geometry} shows a schematic view of
the geometry of the assumed planar endgame scenario, where $X$-$Y$ is
a Cartesian inertial reference frame.  The intercepting missile and
the target are denoted by $M$ and $T$, respectively.  Variables
associated with the interceptor and the target are denoted by
additional subscripts $M$ and $T$, respectively.  The speed, normal
acceleration, and the path angle are denoted by $V$, $a$, and $\gamma$,
respectively.  The slant range between the pursuer and the evader is
$\rho$, and the line of sight (LOS) angle, measured from the $X_{I}$
axis, is $\lambda$.  We linearize the trajectories of the interceptor
and the target along the initial LOS.  Commonly used in guidance
strategies, the interceptor and target accelerations normal to the
initial LOS are denoted by $\grave{a}_{M}$ and $\grave{a}_{T}$,
respectively. These accelerations satisfy the following relations
\begin{align}
	\grave{a}_{M} &= a_{M} \cos (\gamma_{M}^{0} - \lambda^{0}) \\
	\grave{a}_{T} &= a_{T} \cos (\gamma_{T}^{0} + \lambda^{0}).
\end{align}
Variables with superscript $0$ are initial values, about which the
linearization is performed.
\begin{figure}[tbh]
	\begin{centering}
		\includegraphics[width=\figwidth]{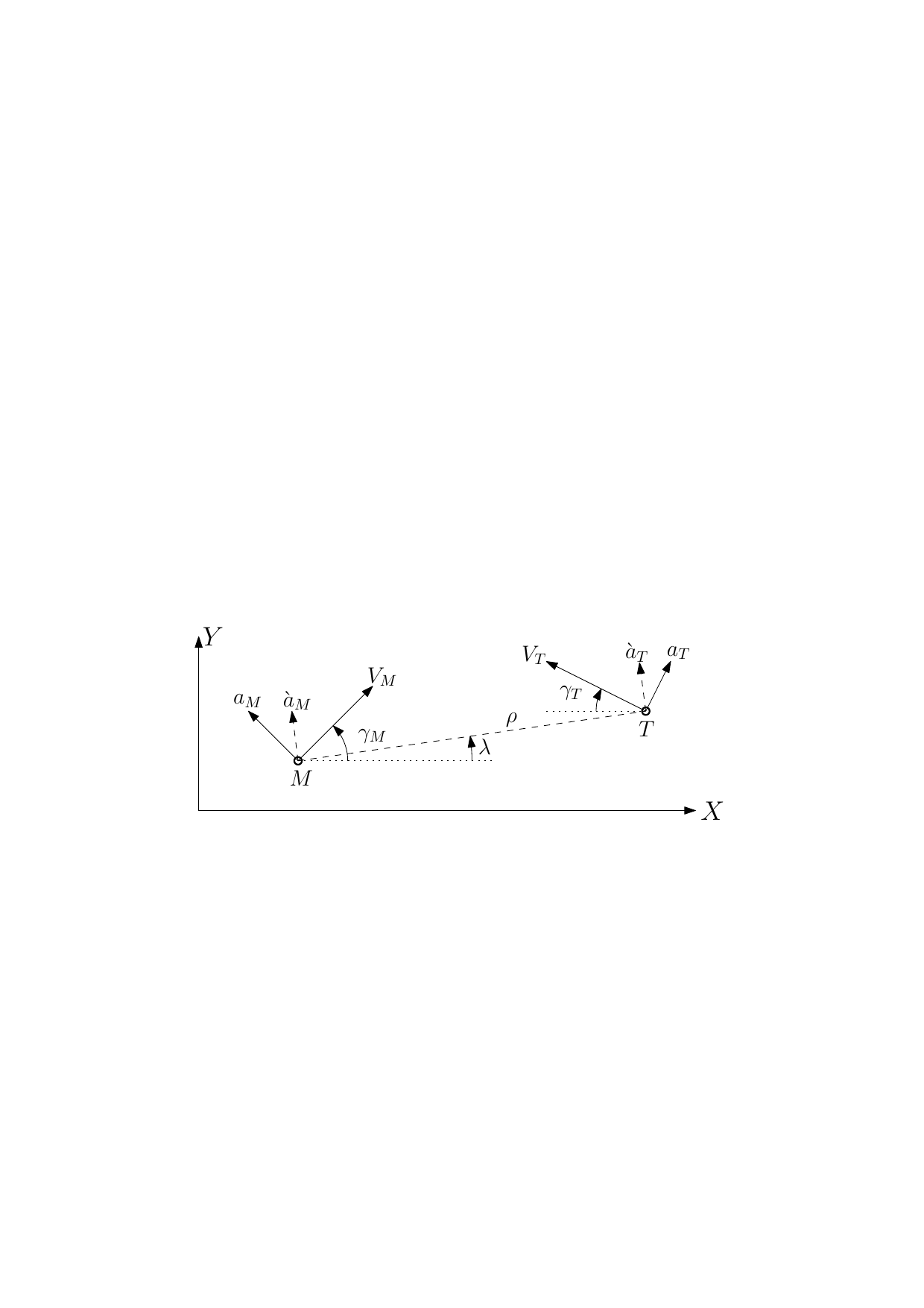}
		\par\end{centering}
	\caption{Planar engagement geometry}
	\label{fig:Planar-engagement-geometry}
\end{figure}

In this scenario, we assume that the interceptor and the target can be
represented as point masses, and that the target's own path angle and
lateral acceleration are known (e.g., via its navigation system).  It
is also assumed that the speeds of both the interceptor and the
target, $V_{M}$ and $V_{T}$, respectively, are known and
time-invariant.  The lateral acceleration command of the target is 
assumed to be bounded by a known constant, $u_{T}^{\max}$.

We define the state vector as
\begin{equation}
	\label{eq:StateVeci}
	\mathbf{x} \rev{\dfn} \begin{bmatrix}
		\xi & \dot{\xi} & \mathbf{q}_{M}^{\top} & \mathbf{q}_{T}^{\top}
	\end{bmatrix}^{\top} \in \mathbb{R}^{n_{x}}
\end{equation}
where \rev{$\xi$ and $\dot{\xi}$ are, respectively, the relative
displacement and relative velocity between the target and the interceptor
normal to the initial LOS, and} $\mathbf{q}_{M}$ and $\mathbf{q}_{T}$ are
the state vectors of internal dynamics of the interceptor and the target,
respectively\rev{, of arbitrary but finite order}.  We assume that their
closed-loop dynamics can be represented via a known, arbitrary-order linear
system.  \rev{In the special case of first-order interceptor and target dynamics,
$\mathbf{q}_{M}$ and $\mathbf{q}_{T}$ reduce to the scalar accelerations
normal to the initial LOS, $\grave{a}_{M}$ and $\grave{a}_{T}$; in
higher-order models they additionally carry the internal autopilot and
actuator states.}  The
discrete-time equations of motion (EOM) are
\begin{equation}
	\mathbf{x}^{k+1} = \mathbf{F}^{k}(\eta) \mathbf{x}^{k} + \mathbf{g}_{M}^{k}(\eta)u_{M}^{k} + \mathbf{g}_{T}^{k}u_{T}^{k} + \omega^{k}
	\label{eq:EOM_evas}
\end{equation}
where $\omega^{k}$ is a white process noise with a known probability density function (PDF) $p_{\omega^{k}}$ and $\eta$ is a random vector representing uncertain parameters of the interceptor dynamics, which also has a known PDF $p_{\eta}$.  We also assume that the process noise and the uncertain parameters are statistically independent.  The process noise is used to stochastically model the effects of physical phenomena that cannot
be modeled deterministically, e.g., turbulence.  The vector of uncertain parameters can be used to handle imprecise knowledge about specific parameters, such as those associated with the interceptor's dynamics, as described in the following example.
\begin{exmp}\label{exmp:first_order}
  Assume that both sides possess strictly proper first-order
  maneuvering dynamics, and define the state vector as
	\begin{equation}
		\label{eq:SV_1st-ord}
		\mathbf{x} \rev{\dfn} \begin{bmatrix}
			\xi & \dot{\xi} & \grave{a}_{M} & \grave{a}_{T}
		\end{bmatrix}^{\top}
	\end{equation}
	\rev{The underlying continuous-time model comprises strictly proper
        first-order lag dynamics for each side},
        {\revblue%
        \begin{subequations}\label{eq:cont_dyn}
          \begin{align}
            \dot{\grave{a}}_{M} &= \frac{u_{M} - \grave{a}_{M}}{\tau_{M}}, \\
            \dot{\grave{a}}_{T} &= \frac{u_{T} - \grave{a}_{T}}{\tau_{T}},
          \end{align}
        \end{subequations}}%
        \rev{together with the relative kinematics normal to the initial LOS,}
        {\revblue%
        \begin{equation}\label{eq:xi_dot}
            \ddot{\xi} = \grave{a}_{T} - \grave{a}_{M}, 
        \end{equation}}%
        \rev{which follow, e.g., \cite[Chap.~2]{zarchan_tactical_2012}.  A
        zero-order-hold discretization} over the interval
        $(t^{k},t^{k}+\Delta t]$ results in the following EOM
	\begin{equation}
		\rev{\mathbf{F}^{k}} =
		\begin{bmatrix}
                  1 & \Delta t & -\tau_{M}^{2} \psi\big(\frac{\Delta t}{\tau_{M}}\big) & \tau_{T}^{2} \psi\big(\frac{\Delta t}{\tau_{T}}\big) \\
                  0 & 1        & -\tau_{M}\Big(1-e^{-\frac{\Delta t}{\tau_{M}}}\Big) & \tau_{T}\Big(1-e^{-\frac{\Delta t}{\tau_{T}}}\Big)   \\
                  0 & 0        & e^{-\frac{\Delta t}{\tau_{M}}} & 0 \\
                  0 & 0 & 0 & e^{-\frac{\Delta t}{\tau_{T}}}
		\end{bmatrix}, \;\; \mathbf{g}_{M} =
		\begin{bmatrix}
			-\tau_{M}^2 \Upsilon\big(\frac{\Delta t}{\tau_{M}}\big) \\
			-\tau_{M} \psi\big(\frac{\Delta t}{\tau_{M}}\big)       \\
			1-e^{-\frac{\Delta t}{\tau_{M}}} \\
			0
		\end{bmatrix}, \;\;
		\mathbf{g}_{T} = 
		\begin{bmatrix}
			\tau_{T}^2 \Upsilon\big(\frac{\Delta t}{\tau_{T}}\big) \\
			\tau_{T} \psi\big(\frac{\Delta t}{\tau_{T}}\big)       \\
			0                                                       \\
			1-e^{-\frac{\Delta t}{\tau_{T}}}
		\end{bmatrix}
	\end{equation}
	where
	\begin{align}
		\psi(t) & = e^{-t}+t-1, \\
		\Upsilon(t) & = \frac{1}{2} t^{2} - e^{-t} - t + 1.
	\end{align}
	In this case, \rev{the missile's time constant $\tau_{M}$ is, from the
        target's standpoint, an uncertain parameter, whereas the target's
        own time constant $\tau_{T}$ is known to it}.  Assuming that
        $\tau_{M} \in [\tau_{M}^{\min},\tau_{M}^{\max}]$, where
        $\tau_{M}^{\min}$ and $\tau_{M}^{\max}$ are known, we can set
        $\eta \sim U[\tau_{M}^{\min},\tau_{M}^{\max}]$.
\end{exmp}
\subsection{Engagement Duration}
The final time is commonly approximated as
\begin{equation}
	\label{eq:t_f}
	t^{f} \approx \frac{\rho^{0}}{V_{M} \cos (\gamma_{M}^{0} - \lambda^{0}) + V_{T} \cos (\gamma_{T}^{0} + \lambda^{0})}
\end{equation}
where $\rho^{0}$ is the initial range.  This approximation precludes
any deviations from the collision course, and, furthermore, it assumes
that the initial range is perfectly known.  Circumventing the latter
problem, we follow~\cite{rusnak_optimal_2000} and model the final
time-step \rev{$f \dfn \lfloor t^{f} / \Delta t \rfloor$} as
a random variable with a known probability mass function (PMF),
$p_{f}(i)$ for all $i=0,1,\dots,\infty$.  As the horizon of engagement
scenarios is finite, $f < \infty$ almost surely, yielding that there
exists $i' < \infty$ such that
\begin{equation}
	\sum_{i=i'}^{\infty} p_{f}(i) = 0.
\end{equation}
This allows defining the time-to-go at time $t^{k}$ as
$t^{k}_{go} \dfn \Delta t(f-k)$.  \rev{As $i$ denotes the integer number of steps to go,
its corresponding time-to-go takes the value $t^{k}_{go}=i\,\Delta t$.}  This
variable is also random, and
its PMF satisfies
\begin{equation}\label{eq:ptgo}
  p_{t^{k}_{go}}(i) = \frac{p_{f}(\rev{i + k})}{\sum_{i=0}^{\infty} p_{f}(\rev{i + k})}, \qquad \forall i=0,1,\dots,\infty,
  \qquad \forall k=0,\dots,k',
\end{equation}
where $k'$, the maximal value of $k$ for which \eqref{eq:ptgo} is
defined, solves 
\begin{align}
	\max_{k\in \{0,1,\dots,\infty\}} & \; k \notag\\
	\text{s.t. } & \;\sum_{i=0}^{\infty} p_{f}(\rev{i + k}) > 0.
\end{align}
Note that the normalization factor is required in cases where
\begin{equation}
	\sum_{i=0}^{k-1} p_{f}(i ) > 0.
\end{equation}
For any $k$ that nullifies this summation we can use
$p_{t^{k}_{go}}(i) = p_{f}(\rev{i + k})$ for all
$i=0,1,\dots,\infty$.
\subsection{Estimation}
Following the guidelines of the GST, we assume that a state estimator
is designed independently.  Using noisy measurements, the estimator
outputs a posterior PDF of the state vector, denoted as $p_{{x}^{k}}$,
at every $\Delta t$ seconds, according to
\begin{equation}
	y^{k} =  h(\mathbf{x}^{k}) + \nu^{k}
	\label{eq:meas}
\end{equation}
where $\nu^{k}$ is a white measurement noise with a known PDF $p_{\nu^{k}}$ and $h(\cdot)$ is a given measurement function.  For example, \cite{fonod_multiple_2016} uses the MMAE estimator to both estimate the interceptor's state and identify its guidance law when the measurement function is linear and the driving noises are Gaussian.

Because this work focuses on the structure of the control law of the
evader, we assume that the estimation problem has already been solved,
and, hence, that the posterior PDF of the state
vector~\eqref{eq:StateVeci} is available.
\subsection{Pursuer Guidance Strategy}
\subsubsection{Linear Guidance Laws}
This work considers evasion from a pursuer that uses a linear guidance
law.  Linear laws result from solving linear-quadratic formulations of
optimal control or differential game problems.  These laws are
commonly represented by the structure:
\begin{equation}\label{eq:GL_struc}
	u_{M} = \frac{\mathcal{N}_{p}z_{p}}{t_{go}^{2}},\qquad p \in \mathcal{P} 
\end{equation}
where $\mathcal{N}$ and $z$ are the navigation gain and the
zero-effort miss (ZEM) of each guidance law, respectively, and
$\mathcal{P}$ is the set of possible guidance laws used by the
pursuer.  \rev{The ZEM is the miss distance that would result if, from the
current instant onward, the target continued on its present course and the
interceptor applied no further acceleration command. Thus, it is the predicted
relative separation at interception under zero future
control~\cite[Chap.~2]{zarchan_tactical_2012},
\cite[Chap.~4]{ben-asher_advances_1998}.}  Using the state
vector~\eqref{eq:SV_1st-ord}, we present the following noteworthy examples
for the ZEM:
\begin{subequations}\label{eq:classical_laws}
	\begin{align}
		z_{\text{PN}}   & = x_{1} + x_{2} t_{go} \\
		z_{\text{APN}}  & = x_{1} + x_{2} t_{go} + \frac{1}{2} t_{go}^2 x_{4} \\
		z_{\text{OGL}}  & = x_{1} + x_{2} t_{go} - \tau_{M}^2 \psi(t_{go}/\tau_{M}) x_{3} + \frac{1}{2} t_{go}^2 x_{4}\\
		z_{\text{LQDG}} & = x_{1} + x_{2} t_{go} - \tau_{M}^2 \psi(t_{go}/\tau_{M}) x_{3} + \tau_{T}^2 \psi(t_{go}/\tau_{T}) x_{4} 
	\end{align}
\end{subequations}
which are all based on $t_{go}$, and some also require $\tau_{M}$, both of which are uncertain from the standpoint of the target.  
The navigation gain is constant for the PN and the APN guidance laws.
Yet, it is time-varying in the OGL and the LQDG guidance laws.  
The navigation gain used by the interceptor is also uncertain from the standpoint of the target, thus it can be included in $\eta$.
\rev{The closed-form ZEM expressions in~\eqref{eq:classical_laws} for the
PN, APN, and OGL laws follow from~\cite[Chaps.~2, 8]{zarchan_tactical_2012},
and the LQDG expression follows
from~\cite[Chap.~4]{ben-asher_advances_1998}.}

Similar to the guidance law proposed in~\cite{shaferman_near-optimal_2021}, and based on the structure of the classical guidance laws~\eqref{eq:GL_struc}, we use the following structure for the guidance law of the interceptor:
\begin{equation}
	u_{M}^{k} = \sum_{i=0}^{d} \mathbf{K}^{k-i}(\eta,t_{go}^{k-i}) \mathbf{x}^{k-i}
	\label{eq:uMk_evas}
\end{equation}
where $d$ denotes delayed states, as we also consider the case that the pursuer uses delayed information regarding the state of the target.
This structure is not limited to a specific guidance law, and it can also include switches between different guidance laws, as proposed, for example, in~\cite{turetsky_target_2016}. 
\subsubsection{Information About The Interceptor's States}
To accurately model the interceptor's guidance law, the target needs
the information that the interceptor has about the entire state
vector~\eqref{eq:StateVeci}, including the relative states, the
interceptor's own states, and the target's states.  Clearly, this
information is inaccessible to the target.  In a similar fashion to
the solutions presented above, the interceptor acquires this
information from noisy and imperfect measurements using a state
estimator.  However, the target only has the statistical information
provided by its own estimator.  Therefore, \rev{because} the true acceleration
command of the interceptor is unknown, the target needs to estimate it
instead.  Moreover, as the target cannot generally know what guidance
law the pursuer uses, we assume that it has to identify it within a
given set of admissible guidance laws.  Therefore, we can model this
estimate as
\begin{equation}
	\hat{u}_{M}^{k} = \sum_{j=1}^{p^{\max}} P^{k}_{j} \sum_{i=0}^{d} \mathbf{K}^{k-i}(\eta,t_{go}^{k-i}) \hat{\mathbf{x}}^{k-i}
	\label{eq:uMk_evas_est}
\end{equation}
where $p^{\max} = \lvert \mathcal{P} \rvert$, $P^{k}_{j} \dfn Pr(p^{k} = j \given \meas^{k})$ is the
mode probability conditioned on the measurement history, $\meas^k$, which can be obtained from a multiple model state estimator, such as the MMAE, and
$\hat{\mathbf{x}}^{k-i} \sim p_{\hat{x}^{k-i}}$ for any $i=0,\dots,d$.
\begin{exmp}
  Consider an interceptor that can select, at each time step, a
  guidance law from the set
  $\mathcal{P}= \{ \text{PN}_{3} ,\text{PN}_{4} , \text{APN}_{3} \}$,
  where $\text{PN}_i$ stands for proportional navigation with
  navigation gain $i$ and $\text{APN}_3$ stands for APN with
  navigation gain 3.  The state estimator of the target provides
  statistical information about the interceptor's states, and the law
  of total probability yields
	\begin{equation}
		p(\mathbf{x}^{k}_{M} \given \meas^{k}) = \sum_{j=1}^{3} \
		P^{k}_{j} p(\mathbf{x}^{k}_{M} \given p^{k} = j , \meas^{k}).
	\end{equation}
	Thus, the guidance law of the interceptor can be modeled as
	\begin{equation}
		\hat{u}_{M}^{k} = \sum_{i=0}^{k'-1-k} \left[ P^{k}_{1} \frac{3 (\hat{x}^{k}_{1 \given 1} + \hat{x}^{k}_{2 \given 1} i)}{i^{2}} + 
		P^{k}_{2} \frac{4 (\hat{x}^{k}_{1 \given 2} + \hat{x}^{k}_{2 \given 2} i)}{i^{2}} + 
		P^{k}_{3} \frac{3 (\hat{x}^{k}_{1 \given 3} + \hat{x}^{k}_{2 \given 3} i + \frac{1}{2} i^2 x^{k-d}_{4} )}{i^{2} }\right]
		p_{t^{k}_{go}}(i)
	\end{equation}
	where
        $\hat{x}_{i \given j} \dfn \expect[x_{i} \given p^{k} = j ,
        \meas^{k}]$ is the current time conditional expectation of the
        i$^{\text{th}}$ state given the measurements and given that
        the j$^{\text{th}}$ mode is valid. Notice that $x^{k}_{4}$,
        which is the target acceleration, is perfectly known to the
        target but not to the interceptor.  Thus, we replace the
        current time estimate with delayed information, $x^{k-d}_{4}$,
        as proposed in~\cite{shaferman_near-optimal_2021}.
\end{exmp}

Determining $p_{\hat{x}^{k-i}}$ is not trivial, as this PDF represents
the target's knowledge about the interceptor's knowledge of the entire
state vector.  The straightforward solution is to use the readily
available outputs of the target's estimator, which are, clearly,
different from the interceptor's estimator's actual outputs.  When
considering the relative states, $\xi$ and $\dot{\xi}$, no better
information regarding these states is available to the target
(however, if it does exist, then the target should use it in its state
estimator).  However, the target's information about the internal
states, $\mathbf{q}_{M}$ and $\mathbf{q}_{T}$, differs drastically
from the interceptor's information.  The target has very accurate
information regarding $\mathbf{q}_{T}$, which is unavailable to the
interceptor.  We can, for example, add artificial noises or use
delayed data to model the inherent uncertainties of the interceptor's
estimation process.  In contradistinction, the target's information
regarding $\mathbf{q}_{M}$ is significantly less accurate than the
information the interceptor has, and there is very little the target
can do about that.  Obviously, if better information regarding any of
the states, the dynamics, or the estimator of the interceptor is
available, then it should be used.

These states evolve with time according to
\begin{equation}
	\hat{\mathbf{x}}^{k+1} = \mathbf{F}^{k}(\eta)\hat{\mathbf{x}}^{k} +  \mathbf{g}_{M}^{k}(\eta)\hat{u}_{M}^{k} + \mathbf{g}_{T}^{k}u_{T}^{k} + \hat{\omega}^{k}
	\label{eq:EOM_evas_hat}
\end{equation} 
where $\hat{\omega}^{k}$ is a white process noise with a known PDF $p_{\hat{\omega}^{k}}$, which models the unknown effects the target has about the information of the interceptor.  
For example, the noise about the target internal dynamics should be higher compared to $\omega^{k}$, as explained above.
\subsection{Performance Index}
\label{subsec:perf_ind}
The cost function, which we aim to maximize in the evasion problem, is set to be
\begin{equation}
	J \dfn \expect \left[ g(\mathbf{x}^{f}) \right]
	\label{eq:J}
\end{equation}
where $g(\cdot)$ is a convex and continuous objective function.  The
expectation is with respect to all the random variables presented
above at the final step of the engagement, $f$, which is also random,
and we assume that it exists.  A detailed description of the evasion
problem is presented in the sequel.
\section{Optimal Evasion Structure}
\label{sec:result}
Providing new insight into the existence and structure of optimal
evasion maneuvers in the stochastic scenarios discussed earlier, the
following theorem presents the first and main result of this paper.
\begin{theorem}
  At each time $t^n$, there exists at least one optimal control
  sequence for the evasion problem. Furthermore, of all optimal
  control sequences, at least one must possess the well-known
  bang-bang structure, that is
	\begin{equation}
		\overset{\star}{u}_{T}^{k} \in \{-u_{T}^{\max}, u_{T}^{\max} \}
	\end{equation}
	for all $k=n,\dots,k'-1$.
	\label{th:BB_evas}
\end{theorem}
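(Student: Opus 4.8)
The plan is to recognize the problem, at a fixed time $t^n$ and conditioned on the information available there, as the maximization of a \emph{convex} functional over the box of admissible control sequences, and then to invoke the classical fact that a convex function attains its maximum over a compact convex polytope at a vertex.

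First I would collect the decision variables into the finite-dimensional vector $\mathbf{u}_T \dfn (u_T^n,\dots,u_T^{k'-1})^\top$, which ranges over the compact, convex hypercube $\mathcal{U} \dfn [-u_T^{\max},u_T^{\max}]^{k'-n}$. Fixing a realization of all exogenous randomness $\zeta$ (the initial state $\mathbf{x}^n$ drawn from the posterior $p_{x^n}$, the uncertain parameters $\eta$, the process noises, and the terminal index $f$), I would substitute the interceptor's linear feedback law~\eqref{eq:uMk_evas} (or its mode-averaged estimate~\eqref{eq:uMk_evas_est}, which is also linear in the state) into the linear EOM~\eqref{eq:EOM_evas}. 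After augmenting the state to absorb the $d$-step delay, this produces a linear, time-varying closed-loop recursion driven by $u_T^k$ and the noise. Unrolling it, an induction on the time index shows that each state admits the affine representation
\begin{equation}
  \mathbf{x}^i = \mathbf{a}_i(\zeta) + \mathbf{B}_i(\zeta)\,\mathbf{u}_T,
\end{equation}
with coefficients that depend on $\zeta$ but \emph{not} on the control. The linearity of both the plant and the guidance law is precisely what renders this closed-loop map affine in $\mathbf{u}_T$.

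Next, because $f$ is random with PMF $p_f$ supported on $\{0,\dots,i'-1\}$, I would condition on it and write the cost~\eqref{eq:J} as the finite sum
\begin{equation}
  J(\mathbf{u}_T) = \sum_{i=0}^{i'-1} p_f(i)\,\expect\!\left[ g(\mathbf{x}^i) \right].
\end{equation}
For each fixed $\zeta$ and each $i$, the integrand $g\!\left(\mathbf{a}_i(\zeta)+\mathbf{B}_i(\zeta)\mathbf{u}_T\right)$ is the composition of the convex function $g$ with an affine map of $\mathbf{u}_T$, hence convex in $\mathbf{u}_T$; a nonnegatively weighted sum and an expectation (a nonnegatively weighted integral) of convex integrands remain convex. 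Thus $J$ is convex on $\mathcal{U}$. Its continuity follows from the continuity of $g$, the continuity of the affine maps, and the standing assumption that the expectation in~\eqref{eq:J} is well defined, via dominated convergence.

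Finally, existence is immediate: $J$ is continuous on the nonempty compact set $\mathcal{U}$, so it attains its maximum, yielding an optimal control sequence. For the bang-bang structure I would invoke Bauer's maximum principle---a convex, upper-semicontinuous function on a compact convex set attains its maximum at an extreme point. The extreme points of the hypercube $\mathcal{U}$ are exactly the vertices whose every coordinate equals $\pm u_T^{\max}$, so at least one maximizer $\overset{\star}{\mathbf{u}}_T$ satisfies $\overset{\star}{u}_T^k \in \{-u_T^{\max},u_T^{\max}\}$ for all $k=n,\dots,k'-1$, which is the claimed bang-bang form. The crux of the argument---and the only step demanding genuine care---is the affineness claim of the first paragraph, where the interceptor's state feedback, the delayed-information structure indexed by $d$, and the random terminal time must all be accommodated; the feedback and delays are dispatched by state augmentation and induction (their linearity guaranteeing affineness), and the random terminal time by conditioning together with the finite support of $p_f$. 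Once affineness is secured, the convexity and the vertex-optimality are essentially standard.
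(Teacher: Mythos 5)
Your proposal is correct and follows essentially the same route as the paper: both reduce the problem to maximizing a convex, continuous functional over the compact hypercube $[-u_T^{\max},u_T^{\max}]^{k'-n}$ by exploiting the affine dependence of the terminal state on the control sequence (linearity of the plant and of the interceptor's feedback law), and both then invoke the extreme-point maximization result --- your Bauer's maximum principle is the same tool as the paper's cited Theorem~7.42 of Beck. Your explicit conditioning on the exogenous randomness and on the finitely supported terminal index, and your state augmentation for the $d$-step delay, merely spell out steps the paper's Lemmas~\ref{lem:set_lemma} and~\ref{lem:objective} treat implicitly.
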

%
Thus, only the boundaries $-u_{T}^{\max}$ and $u_{T}^{\max}$ of the
interval $[-u_{T}^{\max},u_{T}^{\max}]$ need be considered at each
time step, instead of the entire interval, which, clearly,
significantly reduces the computational complexity of the optimal
control problem.  
However, even this drastic reduction in computational complexity
cannot bypass the curse of dimensionality, as it leaves us with
exponential complexity.  
Furthermore, we note that Theorem~\ref{th:BB_evas} clearly does not
guarantee the uniqueness of any optimal evasion maneuver (including
the bang-bang one).
\begin{proof}
  \rev{At a high level, the argument rests on a classical fact of convex
    optimization: a convex, continuous function attains its maximum over a
    convex, compact set at an extreme point of that set
    (Theorem~\ref{th:max_opt}). We therefore cast the evasion problem in
    this form, establishing that its constraint set is convex and compact
    (Lemma~\ref{lem:set_lemma}) and that its objective is convex and
    continuous over that set (Lemma~\ref{lem:objective}). Since the
    constraint set is a hypercube in the space of the target acceleration
    command sequences, its extreme points are exactly the sequences taking
    the boundary values
    $\pm u_{T}^{\max}$ at every stage, which is the asserted bang-bang
    structure.}

  For convenience of exposition, we first explicitly rewrite the
  evasion problem presented in the previous section.  Ideally, we seek
  to find an evasion law in feedback form, which depends on the
  states, the time-to-go (or the final time of the engagement), and
  the uncertain parameters.  Thus, from the standpoint of feedback
  implementation, this problem is formulated as a stochastic model
  predictive control problem, that is, we solve it for its entire
  horizon and then implement just the first control action.  Notice
  that this is an infinite-dimensional problem, and its solution is
  generally intractable. The evasion problem at each time step $t^{n}$
  is:
	\begin{subequations}
		\label{eq:evas_prob}
		\begin{align}
			\underset{\{u_{T}^{k}\}_{k=n}^{k'-1}}{\max} &\; \expect \left[ g(\mathbf{x}^{f}) \right] \\
			\label{subeq:eq_x}
			\text{s.t. } & \; \mathbf{x}^{k+1} = \mathbf{F}^{k}(\eta)\mathbf{x}^{k} +  \mathbf{g}_{M}^{k}(\eta)\hat{u}_{M}^{k} + \mathbf{g}_{T}^{k}u_{T}^{k} + \omega^{k}, \quad \forall k=n,\dots,k'-1, \\
			& \; \mathbf{x}^{n} \sim p_{{x}^{n}}, \quad \eta \sim p_{\eta}, \quad \omega^{k} \sim p_{\omega^{k}}, \quad \forall k=n,\dots,k'-1, \\	
			\label{subeq:eq_u}
			& \; \hat{u}_{M}^{k} = \sum_{j=1}^{p^{\max}} P^{k}_{j} \sum_{i=0}^{d} \mathbf{K}^{k-i}(\eta,t_{go}^{k-i}) \hat{\mathbf{x}}^{k-i},  \quad \forall k=n,\dots,k'-1,\\
			\label{subeq:eq_xh}
			& \; \hat{\mathbf{x}}^{k+1} = \mathbf{F}^{k}(\eta)\hat{\mathbf{x}}^{k} +  \mathbf{g}_{M}^{k}(\eta)\hat{u}_{M}^{k} + \mathbf{g}_{T}^{k}u_{T}^{k} + \hat{\omega}^{k}, \quad \forall k=n,\dots,k'-1, \\
			& \;  \hat{\mathbf{x}}^{n-i} \sim p_{\hat{x}^{n-i}}, \quad \forall i=0,\dots,d, \quad \hat{\omega}^{k} \sim p_{\hat{\omega}^{k}}, \quad \forall k=n,\dots,k'-1,\\		
			& \; f \sim p_{f}(i), \quad t_{go}^{k} \sim p_{t^{k}_{go}}(i), \quad \forall i=0,1,\dots,\infty, \quad \forall k=n-d,\dots,k'\\
			\label{subeq:u_T_bound}
			&  \; | u_{T}^{k} | \leq u_{T}^{\max}, \quad \forall k=n,\dots,k'-1.
		\end{align}
	\end{subequations}
	
        To prove Theorem~\ref{th:BB_evas}, we rely on established
        findings from convex optimization
        theory~\cite{beck_introduction_2014,boyd_convex_2004}, and, in
        particular, on the following Theorem.
	\begin{theorem}[Theorem 7.42 in~\cite{beck_introduction_2014}]
		\label{th:max_opt}
		Let $J: S\rightarrow \mathbb{R}$ be a convex and
                continuous objective function over the convex and
                compact set $S \subseteq \mathbb{R}^{n}$.  Then there
                exists at least one maximizer of $J$ over $S$ that is
                an extreme point of $S$.
	\end{theorem}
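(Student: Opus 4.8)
The plan is to combine the Weierstrass extreme value theorem with the finite-dimensional Krein--Milman (Minkowski) representation of compact convex sets, and to finish with a one-line convexity argument. First I would dispose of the existence claim: since $J$ is continuous and $S$ is compact, the extreme value theorem guarantees that $J$ attains its supremum on $S$. Let $M$ denote this maximal value, $M = \max_{x \in S} J(x)$, and fix any maximizer $x^{\star} \in S$ with $J(x^{\star}) = M$. This settles the first assertion and reduces everything to the following task: upgrade the arbitrary maximizer $x^{\star}$ to an extreme point of $S$ without decreasing the objective.

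Next I would invoke the structural fact that a compact convex set in $\mathbb{R}^{n}$ equals the convex hull of its set of extreme points, $S = \mathrm{conv}(\mathrm{ext}\, S)$ (Minkowski's theorem, the finite-dimensional Krein--Milman result). Combining this with Carath\'eodory's theorem, I can express the chosen maximizer as a finite convex combination of at most $n+1$ extreme points, $x^{\star} = \sum_{i=1}^{n+1} \lambda_{i} v_{i}$, where each $v_{i} \in \mathrm{ext}\, S$, $\lambda_{i} \ge 0$, and $\sum_{i} \lambda_{i} = 1$. The $v_{i}$ are precisely the candidate extreme points on which I will show the maximum is actually attained.

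The key step is then a short pinching argument driven by convexity. Jensen's inequality applied to the convex function $J$ gives $M = J(x^{\star}) \le \sum_{i} \lambda_{i} J(v_{i})$. On the other hand, each $v_{i} \in S$, so $J(v_{i}) \le M$, whence $\sum_{i} \lambda_{i} J(v_{i}) \le \sum_{i} \lambda_{i} M = M$. The two bounds coincide, forcing $\sum_{i} \lambda_{i}\bigl(M - J(v_{i})\bigr) = 0$; since every summand is nonnegative and the weights sum to one, there must exist an index $i$ with $\lambda_{i} > 0$ and $J(v_{i}) = M$. That extreme point $v_{i}$ is a maximizer of $J$ lying at an extreme point of $S$, which is exactly the desired conclusion.

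I expect the only genuine obstacle to be the structural identity $S = \mathrm{conv}(\mathrm{ext}\, S)$. Its proof proceeds by induction on dimension, peeling off supporting hyperplanes and treating the resulting boundary faces as lower-dimensional compact convex sets; in infinite dimensions it additionally requires taking closures and the full Krein--Milman theorem. Since this is a standard result, I would cite it rather than reprove it. The remaining ingredients---Weierstrass for existence, Carath\'eodory to make the convex combination finite, and Jensen for the convexity inequality---are routine and combine cleanly, so I anticipate no further difficulty.
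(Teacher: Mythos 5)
Your proof is correct, and there is nothing in the paper to compare it against: the paper does not prove this statement at all, but imports it verbatim as Theorem~7.42 of~\cite{beck_introduction_2014} and uses it as a black box. Your argument---Weierstrass for attainment of the maximum, Minkowski's theorem (finite-dimensional Krein--Milman) plus Carath\'eodory to write the maximizer as a finite convex combination of extreme points, and the Jensen pinching step to transfer the maximum to one of those extreme points---is exactly the standard proof found in the cited source, so your proposal in effect supplies the proof the paper delegates to its reference.
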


	To employ \ref{th:max_opt}, we need to prove that the
        objective function of the evasion problem~\eqref{eq:evas_prob}
        is convex and continuous over its constraint set, which is
        convex and compact.  It would then follow that the set of
        optimal evasion maneuvers is nonempty, and, at each time step,
        there exists an optimal solution in the set
        $\{-u_{T}^{\max}, u_{T}^{\max}\}$, i.e., an optimal solution
        having a bang-bang structure.
	
	We proceed by proving the following two lemmas.  The first
        lemma proposes that the constraint set is convex and compact.
        The second lemma establishes that the objective function is
        convex and continuous over the constraint set. The main
        Theorem then follows directly from Theorem~\ref{th:max_opt}
        and these two lemmas.
	\begin{lemma}
          \label{lem:set_lemma}
          Presented in
          Eqs.~\eqref{subeq:eq_x}--\eqref{subeq:u_T_bound}, the
          underlying constraint set of the evasion problem is convex
          and compact.
	\end{lemma}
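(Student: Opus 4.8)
The plan is to observe that, although the constraint set is described by the long list of equations \eqref{subeq:eq_x}--\eqref{subeq:u_T_bound}, the genuine decision variable of the optimization is only the finite control sequence $\mathbf{u}_{T} \dfn (u_{T}^{n},\dots,u_{T}^{k'-1}) \in \mathbb{R}^{k'-n}$. First I would argue that the equality constraints \eqref{subeq:eq_x}, \eqref{subeq:eq_u}, and \eqref{subeq:eq_xh} do not restrict the admissible controls at all: for any fixed realization of the random quantities ($\eta$, the process and estimation noises, the initial and delayed states, and the final step $f$), these three relations form a forward recursion that determines each $\mathbf{x}^{k}$, each $\hat{\mathbf{x}}^{k}$, and each $\hat{u}_{M}^{k}$ as an affine function of $\mathbf{u}_{T}$. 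They can therefore be eliminated by substitution, so that the only relation that actually delimits the feasible set in the decision space is the box constraint \eqref{subeq:u_T_bound}.

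This reduces the constraint set to $S = [-u_{T}^{\max}, u_{T}^{\max}]^{k'-n}$, the Cartesian product of $k'-n$ copies of the closed interval $[-u_{T}^{\max}, u_{T}^{\max}]$, and I would then verify the two claimed properties directly. For convexity, $S$ is a product of convex intervals, equivalently the intersection of the half-spaces $\{u_{T}^{k} \le u_{T}^{\max}\}$ and $\{u_{T}^{k} \ge -u_{T}^{\max}\}$ over $k = n,\dots,k'-1$; any convex combination of two feasible sequences stays coordinatewise within the bounds and is thus feasible. For compactness, $S$ is a finite intersection of closed half-spaces, hence closed, and it is bounded since every coordinate lies in $[-u_{T}^{\max}, u_{T}^{\max}]$; by the Heine--Borel theorem $S$ is compact. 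The finiteness of the horizon $k'-n$, which follows from the finite-engagement assumption $f < \infty$ almost surely together with the well-definedness of $k'$, is what places $S$ in a finite-dimensional Euclidean space and legitimizes the appeal to Heine--Borel.

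The crux of the argument --- more a matter of correct framing than a technical obstacle --- is the first step: making precise that the dynamic and guidance-law equalities are eliminable and so impose no further restriction on $\mathbf{u}_{T}$, leaving a pure box. Once the feasible set is identified as $S$, convexity and compactness are immediate. A secondary subtlety worth a remark is that the transition matrices $\mathbf{F}^{k}(\eta)$ and input maps $\mathbf{g}_{M}^{k}(\eta)$ depend on the random parameter $\eta$; however, since $\mathbf{u}_{T}$ is a deterministic decision variable and all randomness is absorbed by the expectation in the objective \eqref{eq:J}, this dependence enters only the affine elimination map and never the shape of $S$, so it affects neither convexity nor compactness.
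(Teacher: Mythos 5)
Your proposal is correct and follows essentially the same route as the paper's own proof: recursively substitute the equality constraints \eqref{subeq:eq_x}, \eqref{subeq:eq_u}, \eqref{subeq:eq_xh} to express all states affinely in $\mathbf{u}_T$ (the paper's Eq.~\eqref{eq:xf-affine}), leaving the feasible set as the hypercube $[-u_T^{\max},u_T^{\max}]^{k'-n}$, which is convex and compact. Your explicit Heine--Borel argument and the remark on the $\eta$-dependence merely spell out details the paper leaves implicit.
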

	\begin{proof}
          The underlying set is presented in its functional form.  If
          the functional form comprises affine functions for equality
          constraints and convex functions for non-positive inequality
          constraints, then this set, denoted as $S$ in
          Theorem~\ref{th:max_opt}, is
          convex~\cite[Ch.~8.1]{beck_introduction_2014}.  The equality
          constraints~\eqref{subeq:eq_x},~\eqref{subeq:eq_u},
          and~\eqref{subeq:eq_xh} can be substituted recursively to
          eliminate the state variables
          $\mathbf{x}^{n+1},\dots,\mathbf{x}^{k'}$ and the estimated
          guidance actions of the interceptor
          $\hat{u}_{M}^{n},\dots,\hat{u}_{M}^{k'-1}$, and the
          estimated state variables used by the interceptor
          $\hat{\mathbf{x}}^{n+1},\dots,\hat{\mathbf{x}}^{k'}$,
          respectively.  Hence, the terminal state can be presented in
          the following manner:
        \begin{equation}
        \label{eq:xf-affine}
        \mathbf{x}^f(\mathbf{u})\;=\;A^f\,\mathbf{u}+\mathbf{b}^f,
        \end{equation}
        where
        $\mathbf{u} \triangleq [u_{T}^{n},\dots,u_{T}^{k'-1}]^\top$ is
        the vector that contains the target acceleration commands for
        the horizon, and $A_f,\mathbf{b}_f$ depend on the exogenous
        variables, such as $\eta$, $\{\omega^k\}_{n}^{k'-1}$, and
        $\mathbf{x}^n$, but not on $\mathbf{u}$.  Thus, the underlying
        set can be expressed using only the target acceleration
        commands $\mathbf{u}$ and the terminal state $\mathbf{x}^f$.
        After eliminating the state variables, we remain with the
        constraints in~\eqref{subeq:u_T_bound}, that can be rewritten
        as $\abs{u_{T}^{k}} - u_{T}^{\max} \leq 0$, that is, the
        underlying set is the hypercube
        $[-u_T^{\max},u_T^{\max}]^{k'-n}$, which is convex and
        compact.
	\end{proof}
	
	We next prove that the objective function is both convex and
        continuous over the constraint set.
	\begin{lemma}
          \label{lem:objective}
          Defined in Eq.~\eqref{eq:evas_prob}, the objective function
          of the evasion problem is convex and continuous over the
          constraint set.
        \end{lemma}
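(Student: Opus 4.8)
The plan is to carry the affine parametrization established in Lemma~\ref{lem:set_lemma} directly into the objective. By Eq.~\eqref{eq:xf-affine}, for every realization of the exogenous variables $\eta$, $\{\omega^k\}$, $\{\hat{\omega}^k\}$, $\mathbf{x}^n$, $\{\hat{\mathbf{x}}^{n-i}\}$, and $f$, the terminal state is affine in the stacked control, $\mathbf{x}^f = A^f\mathbf{u} + \mathbf{b}^f$, with $A^f$ and $\mathbf{b}^f$ independent of $\mathbf{u}$. The objective can therefore be written as $J(\mathbf{u}) = \expect\!\left[g(A^f\mathbf{u}+\mathbf{b}^f)\right]$, the expectation running over all the exogenous randomness. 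I would then establish convexity and continuity separately for this representation, noting that measurability of the integrand in the exogenous variables (for each fixed $\mathbf{u}$) follows from continuity of $g$ composed with the measurable maps $A^f,\mathbf{b}^f$.

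For convexity I would first fix a realization and observe that $\mathbf{u}\mapsto g(A^f\mathbf{u}+\mathbf{b}^f)$ is the composition of the convex map $g$ with an affine map, hence convex by a standard convexity-preserving operation. Convexity is then inherited by the expectation: applying the convex-combination inequality for $g$ pointwise inside the integral and using the linearity and monotonicity of $\expect[\cdot]$ yields $J(\lambda\mathbf{u}_1 + (1-\lambda)\mathbf{u}_2) \le \lambda J(\mathbf{u}_1) + (1-\lambda)J(\mathbf{u}_2)$ for all $\lambda\in[0,1]$. Since this argument never invokes the bound $\abs{u_T^k}\le u_T^{\max}$, it in fact shows $J$ is convex on all of $\mathbb{R}^{k'-n}$, a fact I would then exploit for continuity.

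For continuity I would invoke the classical result that a finite-valued convex function on an open convex subset of $\mathbb{R}^{k'-n}$ is continuous (indeed locally Lipschitz) there. Because $g$ is finite everywhere and, by the standing assumption in Section~\ref{subsec:perf_ind}, $J$ exists and is finite, $J$ is a finite convex function on $\mathbb{R}^{k'-n}$ and hence continuous on it, in particular over the hypercube constraint set. Combining convexity and continuity over the convex, compact set of Lemma~\ref{lem:set_lemma} completes the lemma, which together with Theorem~\ref{th:max_opt} then delivers Theorem~\ref{th:BB_evas}.

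The hard part will be justifying the passage of continuity through the expectation at the boundary of the hypercube, since a convex function can fail to be continuous on the boundary of its effective domain. The clean resolution is to read the assumption ``$J$ exists'' as finiteness of $J$ on an open neighborhood of the hypercube rather than merely on the hypercube itself, so that the finite-convex-implies-continuous theorem applies at boundary points too. Should only pointwise finiteness be granted, I would instead argue by dominated convergence: along any $\mathbf{u}_m\to\mathbf{u}_0$ the integrand converges pointwise by continuity of $g$, and a measurable integrable envelope can be manufactured from convexity itself --- the supremum of $g$ over the bounded image of the hypercube is attained at the images of its finitely many vertices, while an affine minorant of $g$ bounds the integrand from below --- whence $J(\mathbf{u}_m)\to J(\mathbf{u}_0)$.
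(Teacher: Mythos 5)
Your proposal is correct, and its convexity half is essentially the paper's own argument: both exploit the affine parametrization $\mathbf{x}^f = A^f\mathbf{u}+\mathbf{b}^f$ from Lemma~\ref{lem:set_lemma}, note that $g$ composed with an affine map is convex, and pass convexity through the expectation by linearity and monotonicity. Where you genuinely diverge is the continuity half. The paper disposes of it in one sentence, asserting that the same operations (affine composition, nonnegative combinations, expectation) ``also preserve continuity''; strictly speaking, continuity of $\mathbf{u}\mapsto\expect\left[g(A^f\mathbf{u}+\mathbf{b}^f)\right]$ requires exchanging a limit with the expectation, which is not automatic without a domination or uniform-integrability argument --- a step the paper leaves implicit. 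You instead derive continuity from convexity itself (a finite convex function on an open set is locally Lipschitz), correctly flag the subtlety that this theorem does not cover boundary points of the hypercube, and offer two repairs: reading the existence assumption of Sec.~\ref{subsec:perf_ind} as finiteness on a neighborhood of the constraint set, or a dominated-convergence argument whose envelope is the maximum of the integrand over the $2^{k'-n}$ vertices (using that a convex function on the hypercube attains its maximum at a vertex) together with an affine minorant from below. Your route costs slightly more --- it needs integrability of the vertex evaluations, a mild strengthening of the paper's pointwise ``the expectation exists'' --- but it buys a continuity proof that actually closes the limit-interchange gap; the paper's route is shorter but is rigorous only if one tacitly assumes such domination.
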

	\begin{proof}
          Lemma~\ref{lem:set_lemma} shows that the underlying set of
          the evasion problem is convex, and, while proving it, we
          have established that only the target acceleration commands
          $u_{T}^{n},\dots,u_{T}^{k'-1}$ remain as free variables in
          this problem.  Using~\eqref{eq:xf-affine}, the objective
          function is the expected value of
          $g(\mathbf{x}^f)=g(A^f\,\mathbf{u}+\mathbf{b}^f)$, which
          preserves its convexity under linear change of
          variables~\cite[Theorem~7.17]{beck_introduction_2014}.
          Moreover, the expectation operator preserves the convexity
          of $g(\cdot)$ as infinite summation and multiplication by
          nonnegative scalars also preserve
          convexity~\cite[Section~3.2.1]{boyd_convex_2004}.  The lemma
          then follows from observing that these operations also
          preserve continuity.
	\end{proof}

	As we have shown that the objective function is convex and
        continuous (Lemma~\ref{lem:objective}) over its convex and
        compact constraint set (Lemma~\ref{lem:set_lemma}), our main
        result follows directly from Theorem~\ref{th:max_opt}.
\end{proof}

\section{Terminal-Set-Based Evasion}
\label{sec:terminal_set_evasion}

This section leverages the bang-bang structure established earlier to
obtain a practical evasion law with analytic evaluation based on
terminal sets.  The construction of these sets allows us to evaluate
the resulting cost for the two extreme current commands
\(u_{T}^{n}\in\{\pm u_{T}^{\max}\}\) in closed form.  To make use of
these evaluations, we define a selector as a rule that maps the
predicted terminal outcomes associated with the two admissible
commands, \(u_{T}^{n}=+u_{T}^{\max}\) and \(u_{T}^{n}=-u_{T}^{\max}\),
to the chosen control input at stage \(n\).  The resulting selector
reduces the exponential search over future bang sequences to a single
score comparison that determines the current acceleration command.
\subsection{Scope and Assumptions}
\label{subsec:ts_scope_assumptions}
Consider the discrete-time stochastic endgame formulated in Sec.~\ref{Sec:prob_form}.
We denote \(n\) as the current decision step, \(k\) a generic stage, \(i\) a candidate terminal index, and \(f\) the random terminal index.
The state is \(\mathbf{x}\in\mathbb{R}^{n_x}\).
We let \(\mathbf{C}\in\mathbb{R}^{p\times n_x}\) be a fixed output-selection matrix that extracts the components of the terminal state relevant to the performance measure (e.g., miss distance).
The terminal-time PMF is \(p_f(i)=\mathbb{P}[f=i]\).
The pursuer’s guidance law is selected from \(\mathcal{P}=\{p_j\}_{j=1}^{p^{\max}}\) with mode probabilities \(P^{n}_{j}\) at stage \(n\).
An estimator provides the necessary first and second moments; for example, the Kalman filter (KF) yields Gaussian posteriors, and the MMAE yields a Gaussian mixture model.
The performance index instantiated here is a squared terminal cost,
\begin{equation}
\label{eq:ts_term_cost_def}
g(\mathbf{x}^{f}) \;=\; \big\| \mathbf{C}\,\mathbf{x}^{f} \big\|^2 ,
\end{equation}
rendering \(g(\cdot)\) a convex and continuous cost, as assumed in
Sec.~\ref{subsec:perf_ind}.
\subsection{Terminal-Set Representation}
\label{subsec:ts_terminal_representation}
For each index $i$ in the support of the random terminal index $f$, that is, each candidate terminal step with $p_f(i)>0$, and for each admissible guidance-law mode $j \in \{1,\dots,p^{\max}\}$, unrolling the discrete-time closed-loop dynamics yields an affine dependence of the terminal state on the current command:
\begin{equation}
\label{eq:ts_terminal_affine}
\mathbf{x}^{f}_i \;=\; \mathbf{a}^n(i,j)\,u_{T}^{n} \;+\; \mathbf{z}_{i,j}^{n}.
\end{equation}

The vector $\mathbf{a}^n(i,j)\in\mathbb{R}^{n_x}$ is the input-to-terminal-state gain of the current 
evader command under mode $p_j$,
\begin{equation}
\label{eq:ts_a0_def}
\mathbf{a}^n(i,j)
\;\triangleq\;
\mathbf{\Phi}\!\left(i,n+1;p_j\right)\,\mathbf{g}_{T},
\end{equation}
where the mode-conditioned state-transition product
\begin{equation}
\label{eq:ts_Phi_def}
\mathbf{\Phi}\!\left(i,\ell;p_j\right)
\;\triangleq\;
\mathbf{F}^{i-1}\!\left(\eta,p_j\right)
\cdots
\mathbf{F}^{\ell}\!\left(\eta,p_j\right),
\qquad
\mathbf{\Phi}\!\left(\ell,\ell;p_j\right)=\mathbf{I}_{n_x},
\end{equation}
maps states from stage $\ell$ to $i$. 
Here $\mathbf{F}^{k}(\eta,p_j)$ denotes the \rev{pursuer-}closed-loop dynamics matrix obtained from the open-loop matrix $\mathbf{F}^{k}(\eta)$ in Sec.~\ref{Sec:prob_form} by substituting the pursuer’s mode-$p_j$ guidance law into the state-update equation.
At the same time, the evader command still enters through $\mathbf{g}_{T} u_{T}^{k}$\rev{, so the loop is closed only around the pursuer's guidance law and not around the target's evasion command}.
Thus, any state delay or time-to-go dependence of the \rev{pursuer's} guidance law is embedded in $\mathbf{F}^{k}(\eta,p_j)$ and, in turn, in $\mathbf{\Phi}(\cdot;p_j)$.

The term $\mathbf{z}_{i,j}^{n}$ aggregates all contributions to the terminal state that do not depend 
on the current command $u_{T}^{n}$:
\begin{equation}
\label{eq:ts_z_def}
\mathbf{z}_{i,j}^{n}
\;\triangleq\;
\sum_{k=n+1}^{i-1} \mathbf{a}^{k}(i,j)\,u_{T}^{k}
\;+\;
\mathbf{d}_{i,j}
\;+\;
\mathbf{w}_{i,j},
\end{equation}
with input-to-terminal gains
\begin{equation}
\label{eq:ts_ak_def}
\mathbf{a}^{k}(i,j)
\;=\;
\mathbf{\Phi}\!\left(i,k+1;p_j\right)\,\mathbf{g}_{T},
\qquad
k=n+1,\dots,i-1.
\end{equation}
The vector $\mathbf{d}_{i,j}$ collects all known, deterministic
contributions to the terminal state under mode $p_j$, including the
pursuer’s closed-loop guidance inputs and any modeled exogenous terms.
The term $\mathbf{w}_{i,j}$ aggregates all stochastic contributions to
the terminal state under mode $p_j$, including the accumulated effects
of process noise, measurement noise, and the propagated
posterior-state uncertainty produced by the estimator (KF or MMAE).
In particular, $\mathbf{w}_{i,j}$ represents the total uncertainty
entering through the noise terms in the dynamics and their propagation
through the estimator’s moment updates, and is, therefore, distinct from
the modeled stochasticity of the future evader inputs
$\{u_T^{k}\}_{k>n}$.

For each $(i,j)$, define the mean and covariance of $\mathbf{z}_{i,j}^{n}$,
\begin{equation}
\label{eq:ts_mu_sigma_def}
\boldsymbol{\mu}_{i,j}^{n}
\;\triangleq\;
\mathbb{E}\!\big[\mathbf{z}_{i,j}^{n}\big],
\qquad
\boldsymbol{\Sigma}_{i,j}^{n}
\;\triangleq\;
\operatorname{Var}\!\big(\mathbf{z}_{i,j}^{n}\big),
\end{equation}
which expand, without requiring independence assumptions, as
\begin{subequations}
    \begin{align}
    \label{eq:ts_mu_expand}
    \boldsymbol{\mu}_{i,j}^{n}
    &=
    \sum_{k=n+1}^{i-1} \mathbf{a}^{k}(i,j)\, m^{k}
    \;+\;
    \mathbf{d}_{i,j}
    \;+\;
    \mathbb{E}\!\big[\mathbf{w}_{i,j}\big], \\
    \label{eq:ts_sigma_expand}
    \boldsymbol{\Sigma}_{i,j}^{n}
    &=
    \sum_{p=n+1}^{i-1} \sum_{q=n+1}^{i-1}
    \mathbf{a}^{p}(i,j)\,
    \operatorname{Cov}\!\big(u_{T}^{p},u_{T}^{q}\big)\,
    \mathbf{a}^{q}(i,j)^{\!\top}
    \;+\;
    \operatorname{Var}\!\big(\mathbf{w}_{i,j}\big),
    \end{align}
\end{subequations}
where $m^{k}\triangleq \mathbb{E}[u_{T}^{k}]$ arises from the
stochastic modeling of future evader inputs in the expectation
evaluation. In particular, the controller optimizes only the current
command $u_T^n$, while the future inputs $\{u_T^k\}_{k>n}$ are treated
as random variables with a prescribed distribution used solely for
evaluating the expected cost.

\subsection{Evasion Law Derivation}
\label{subsec:ts_evas_law}
The objective for the current decision is
\begin{equation}
\label{eq:ts_J_def}
J(u_{T}^{n})
\;\triangleq\;
\mathbb{E}\!\left[\, g(\mathbf{x}^{f}) \,\right]
\;=\;
\sum_{i} p_f(i)\;\sum_{j=1}^{p^{\max}} P^{n}_{j}\;
\mathbb{E}\!\left[\, \big\|\mathbf{C}\,\mathbf{x}^{f}_i\big\|^2 \,\middle|\, f=i, p_j \right].
\end{equation}
Using~\eqref{eq:ts_terminal_affine} and the quadratic identity for any
finite second-moment \(\mathbf{z}_{i,j}^{n}\), with \(\|\cdot\|\)
denoting the Euclidean norm, 
\begin{equation}
\label{eq:ts_quad_identity}
\mathbb{E}\!\left[\, \big\|\mathbf{C}\big(\mathbf{a}^n(i,j)u_{T}^{n}+\mathbf{z}_{i,j}^{n}\big)\big\|^2 \,\right]
\;=\;
\big\| \mathbf{C}\,\big(\mathbf{a}^n(i,j)u_{T}^{n}+\boldsymbol{\mu}_{i,j}^{n}\big) \big\|^2
\;+\;
\operatorname{tr}\!\big( \mathbf{C}\,\boldsymbol{\Sigma}_{i,j}^{n}\,\mathbf{C}^{\!\top} \big).
\end{equation}
where \(\operatorname{tr}\) denotes the trace.
Note that only the mean term depends on \(u_{T}^{n}\). Therefore, evaluating~\eqref{eq:ts_J_def} at the two extreme commands gives the terminal-set scores
\begin{align}
    \label{eq:ts_S_plus}
    & S_{+}^{n}
    \;\triangleq\;
    \sum_{i} p_f(i)\, \sum_{j=1}^{p^{\max}} P^{n}_{j} \;
    \big\| \mathbf{C}\,\big(\mathbf{a}^n(i,j)u_{T}^{\max}+\boldsymbol{\mu}_{i,j}^{n}\big) \big\|^2, \\
    \label{eq:ts_S_minus}
    &S_{-}^{n}
    \;\triangleq\;
    \sum_{i} p_f(i)\, \sum_{j=1}^{p^{\max}} P^{n}_{j} \;
    \big\| \mathbf{C}\,\big(-\mathbf{a}^n(i,j)u_{T}^{\max}+\boldsymbol{\mu}_{i,j}^{n}\big) \big\|^2 .
\end{align}
Equivalently, define \(\tilde{\mathbf{a}}_{i,j}^{n}\triangleq \mathbf{C}\mathbf{a}^n(i,j)\) and \(\tilde{\boldsymbol{\mu}}_{i,j}^{n}\triangleq \mathbf{C}\boldsymbol{\mu}_{i,j}^{n}\), then
\begin{equation}
\label{eq:ts_switch_test}
S_{+}^{n}-S_{-}^{n}
\;=\;
4\,u_{T}^{\max}\;
\sum_{i} p_f(i)\, \sum_{j=1}^{p^{\max}} P^{n}_{j}\;
\big\langle \tilde{\mathbf{a}}_{i,j}^{n},\, \tilde{\boldsymbol{\mu}}_{i,j}^{n} \big\rangle.
\end{equation}
where \(\langle a,b\rangle \triangleq a^\top b\).
Define the shaping function
\begin{equation}
    \label{eq:ts_shaping_func}
    S_n \;=\; \sum_{i} p_f(i)\, \sum_{j=1}^{p^{\max}} P^{n}_{j}\;
    \big\langle \tilde{\mathbf{a}}_{i,j}^{n},\, \tilde{\boldsymbol{\mu}}_{i,j}^{n} \big\rangle ,
\end{equation}
which aggregates, over all candidate terminal steps and guidance-law
modes, the expected inner product between the input-to-terminal-state
gain and the corresponding mean terminal shift.  The function $S_n$
encodes, in a single scalar, the net effect of choosing
$u_T^n=\pm u_T^{\max}$ on the expected terminal cost, and its sign
determines which extreme command is preferable.  The obvious optimal
choice is
\begin{equation}
\label{eq:ts_u_star}
\overset{\star}{u}_{T}^{n} \;=\; u_{T}^{\max}\,\sign(S_n) ,
\end{equation}
where $\sign(\cdot)$ denotes the signum function.

We term this proposed evasion law the terminal-set-based evasion
strategy, and use the shorthand notation TSE in the ensuing
discussion.
\subsection{Discussion}
\label{subsec:ts_discussion}

The terminal-set evaluation is analytical: once the posterior means
and covariances $(\boldsymbol{\mu},\boldsymbol{\Sigma})$ are available
from the estimator, the terminal-set scores
in~\eqref{eq:ts_S_plus}-\eqref{eq:ts_S_minus} and the associated cost
$J(u_{T}^{n})$ follow directly from closed-form expressions.  In
contrast to~\cite{shaviv_estimation-guided_2017}, no sequential MC is
required for online use.  The Gaussian assumption is consistent with
the KF or MMAE posterior, and the latter provides the required
mode-conditioned statistics and weights.

From a computational point of view, the method is lightweight.  Let
$N_f = \#\{\, i : p_f(i) > 0 \,\}$ be the number of terminal steps
with positive probability, $p^{\max}$ the number of guidance-law
modes, and $\bar H \triangleq \mathbb{E}[\, i - n \,]$ the average
remaining horizon.
Constructing the short-horizon transition products and
input-to-terminal gains scales as \(O(p^{\max}\,N_f\,\bar H)\) for the
mean terms and \(O(p^{\max}\,N_f\,\bar H^{2})\) for the covariance
matrices.  Online evaluation of the shaping
function~\eqref{eq:ts_shaping_func} then requires only
\(O(p^{\max}\,N_f\,n_x)\) operations per decision.  The memory
footprint is modest, limited to storing the precomputed terminal gains
and corresponding moment statistics.

For comparison, the naive bang-bang MPC formulation in~\eqref{eq:evas_prob}
explores on average \(2^{\bar H}\) control branches per horizon step, 
yielding exponential complexity in \(\bar H\). 
The terminal-set framework collapses this to near-linear scaling in the horizon, 
achieving orders-of-magnitude savings that make real-time stochastic evasion 
practical.

Finally, the control selector in~\eqref{eq:ts_u_star} naturally retains the bang-bang structure: maximizing a convex terminal cost over the interval \([ -u_{T}^{\max},\,u_{T}^{\max} ]\) attains an extreme point. 
Uncertainties in both terminal time and guidance law merely reweight the two terminal-set scores without altering this fundamental structure.

\section{Numerical Evaluation}
\label{sec:ts_numerical_eval}
We illustrate the terminal-set-based selector on a planar lateral endgame with a single guidance mode. 
\subsection{Simulation Environment}
The evader applies bounded lateral acceleration \(u_{T}^{k}\in[-u_{T}^{\max},u_{T}^{\max}]\) with \(u_{T}^{\max}=9g\). 
The pursuer uses PN with navigation constant \(N=3\). 
The sampling time is \(\Delta t=0.01~\mathrm{s}\), and the closing speed is \(V_c=400~\mathrm{m/s}\).
Let the lateral state be \(\mathbf{x}^k=\begin{bmatrix} \xi^k &  \dot{\xi}^k \end{bmatrix}^{\!\top}\) and
\begin{equation}
\label{eq:ts_lateral_dynamics}
\mathbf{x}^{k+1}
=
\underbrace{\begin{bmatrix}1&\Delta t\\[2pt]0&1\end{bmatrix}}_{\mathbf{F}}
\mathbf{x}^k
+
\underbrace{\begin{bmatrix}\tfrac{(\Delta t)^2}{2}\\[2pt] \Delta t\end{bmatrix}}_{\mathbf{g}_T} u_{T}^{k}
+
\underbrace{\begin{bmatrix}\tfrac{(\Delta t)^2}{2}\\[2pt] \Delta t\end{bmatrix}}_{\mathbf{g}_M} u_{M}^{k},
\qquad
\mathbf{C}=\begin{bmatrix}1&0\end{bmatrix}.
\end{equation}
The terminal index is discretely uniformly distributed over the endgame window,
\begin{equation}
\label{eq:ts_pf_uniform}
f\sim\text{U}\{295,\dots,305\}, \quad N_{f} = 11.
\end{equation}
This implies a final time span \(t^{f}\in[2.95,\,3.05]~\mathrm{s}\) and an associated  initial range uncertainty \(\rho^0\approx V_c\,t^{f}\in[1180,\,1220]~\mathrm{m}\).
For a candidate terminal index \(i\), the PN acceleration is
\begin{equation}
\label{eq:ts_pn_law}
u_{M}^{k}(i) \;=\; N\,\frac{\,\xi^k + t_{{go}}^{k}(i)\,\dot{\xi}^k\,}{\big(t_{{go}}^{k}(i)\big)^{2}},
\end{equation}
where $N=3$, which induces the time-varying closed-loop matrices
\begin{equation}
\label{eq:ts_F_timevarying}
\mathbf{F}^k(i) \;=\; \mathbf{F} + \mathbf{g}_M
\begin{bmatrix}
\displaystyle \frac{N}{(t_{{go}}^{k}(i))^{2}} \;\; & \;\; \displaystyle \frac{N}{t_{{go}}^{k}(i)}
\end{bmatrix},
\qquad k=n,\dots,i-1 .
\end{equation}
We select \(\overset{\star}{u}_{T}^{n}\) by \eqref{eq:ts_u_star}, and the corresponding analytic performance is
\begin{equation}
\label{eq:ts_expected_cost_example}
\mathbb{E}\!\left[\xi^2(t^{f})\,\middle|\,\overset{\star}{u}_{T}^{n}\right]
=
\sum_{i} p_f(i)\;
\Big(
\big\| \mathbf{C}\,\big(\mathbf{a}^n(i)\,\overset{\star}{u}_{T}^{n}+\boldsymbol{\mu}_i\big) \big\|^2
+
\operatorname{tr}\!\big(\mathbf{C}\,\boldsymbol{\Sigma}_i\,\mathbf{C}^{\!\top}\big)
\Big).
\end{equation}

\subsection{Single Run: Confirming Bang-Bang Structure}
To emphasize extremality, we model future evader commands as an
i.i.d.\ sequence with a uniform distribution,
\(u_{T}^{k}\sim\mathcal{U}[-u_{T}^{\max},u_{T}^{\max}]\) for
\(k>n\). This changes only the variance to
\(\operatorname{Var}(u_{T}^{k})=(u_{T}^{\max})^2/3\) and yields a
quadratic cost
\begin{equation}
\label{eq:ts_J_of_u_example}
J(u_{T}^{n})
=
\sum_{i}p_f(i)\,
\Big(
\big\|\mathbf{C}\big(\mathbf{a}^n(i)\,u_{T}^{n}+\boldsymbol{\mu}_i\big)\big\|^2
+
\operatorname{tr}\!\big(\mathbf{C}\,\boldsymbol{\Sigma}^{\text{(unif)}}_{i}\,\mathbf{C}^{\!\top}\big)
\Big),
\end{equation}
which is maximized on the convex set \([-\!u_{T}^{\max},u_{T}^{\max}]\) at the endpoints \(\pm u_{T}^{\max}\).

Figure~\ref{fig:ts_single_step_boundary} illustrates the shape of the
expected cost $J(u_{T}^{n})$ under the assumption that future evader
inputs are uniformly distributed. The curve is strictly convex, with
maximum value achieved at the control bounds $u_T^{\max}$. This
demonstrates, analytically and numerically, that the optimal command
is bang-bang, even without explicitly assuming a bang-bang structure;
thus, validating, for this example, the theoretical extremality result
of Theorem~\ref{th:BB_evas}.
\begin{figure}[t]
  \centering
  \includegraphics[width=\figwidth]{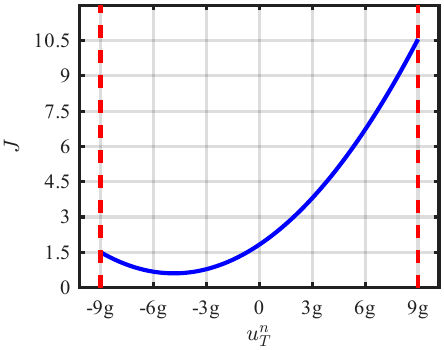}
  \caption{\rev{Expected cost $J(u_{T}^{n})$ under the uniform future-input model.}}
  \label{fig:ts_single_step_boundary}
\end{figure}
\subsection{Monte Carlo Study}
We next conduct an MC study to compare the proposed TSE policy against
three classical stochastic evasion models: the RTS, Singer, and
weaving models.  \rev{These three models are the canonical stochastic
evasion benchmarks in the guidance literature, and they span its three
dominant maneuver paradigms: the RTS captures bang-bang random
switching~\cite{zarchan_representation_1979, zarchan_tactical_2012}, the
Singer process captures correlated Gauss--Markov
acceleration~\cite{singer_estimating_1970}, and the weaving model captures
deterministic periodic motion~\cite{zarchan_tactical_2012}. They are the
standard maneuver models against which interceptor estimators and guidance
laws are evaluated~\cite{yanushevsky_analysis_2004, marks_multiple_2006,
weiss_robust_2008, ratnoo_three-point_2016}, which makes them the natural
points of comparison for the proposed TSE law.}  Each simulation models a
planar lateral engagement
governed by the discrete-time dynamics~\eqref{eq:ts_lateral_dynamics}
with sampling time $\Delta t=0.01~\mathrm{s}$, closing speed
$V_{c}=400~\mathrm{m/s}$, and a proportional navigation pursuer with a
navigation constant $N=3$.  The evader and pursuer accelerations are
limited to $u_{T}^{\max}=9g$ and $u_{M}^{\max}=27g$, respectively.

A total of $N_{\mathrm{MC}}=10{,}000$ independent trials are executed. 
In each trial, the true terminal index $f$ is drawn from the discrete support defined in~\eqref{eq:ts_pf_uniform}. 
The initial state of the engagement is sampled from a zero–mean Gaussian distribution,
\begin{equation}
    \textbf{x}^{0}\sim\mathcal{N}\!\big(\textbf{0},P^{0}\big), 
    \qquad 
    P^{0}=\operatorname{diag}(100,\,4),
\end{equation}
so that both the initial lateral offset and relative velocity vary randomly across trials. 

The measurement model is
\begin{equation}
    y^{k}= \mathbf{C}\,\mathbf{x}^{k}+\nu^{k},
    \qquad 
    \mathbf{C}=\begin{bmatrix}1 & 0\end{bmatrix},
\end{equation}
where $\nu^{k}\sim\mathcal{N}(0,R_{k})$ represents LOS angle jitter mapped into lateral-position noise. 
The measurement noise variance is not based on the true time-to-go but on a nominal one corresponding to the mean of the terminal-time distribution, ensuring that the estimator operates without access to the actual engagement duration. 
Specifically,
\begin{equation}
    R_{k}=\big(\sigma_{\lambda}V_{c}t_{go}^{k,\mathrm{nom}}\big)^{2},
    \qquad 
    \sigma_{\lambda}=5~\mathrm{mrad},
\end{equation}
where $t_{go}^{k,\mathrm{nom}}=(\bar{f}-k)\Delta t$ and $\bar{f}=\mathbb{E}[f]\approx300$ for the chosen support~\eqref{eq:ts_pf_uniform}. 
The process noise covariance is defined as
\begin{equation}
    Q_{k}=(u_{T}^{\max})^{2}\,\mathbf{g}_{T}\mathbf{g}_{T}^{\!\top}.
\end{equation}

Both sides employ identical KFs as state estimators.  To model its
information advantage, the pursuer is initialized with a smaller
covariance,
\begin{equation}
    P_{M}^{0}=\beta P_{T}^{0}, \qquad \beta=0.25,
\end{equation}
indicating that the pursuer's initial uncertainty is smaller than the
evader's.  Moreover, we assume that both agents know each other’s
acceleration commands, which is conservative from the evader’s
perspective. On the target’s side, this assumption is reasonable: as
shown in~\cite{fonod_multiple_2016}, the pursuer’s guidance law can be
identified using an MMAE filter, allowing the target to reconstruct
its acceleration commands. The converse, however, has not been
established, as the authors are not aware of any proof or
demonstration that the pursuer can estimate the target's stochastic
maneuver commands.  Imposing this assumption on the pursuer,
therefore, benefits the pursuer and makes the evasion problem more
challenging for the target, rendering our simulation results
conservative.

During each simulation, the pursuer computes its proportional
navigation acceleration command~\eqref{eq:ts_pn_law} using the
estimates and the mean value for the time-to-go. This command is
saturated to $\pm u_{M}^{\max}$.  The evader acceleration $u_{T}^{n}$
depends on the evasion strategy under test.  For the proposed TSE
case, the terminal–set shaping function $S_{n}$
from~\eqref{eq:ts_shaping_func} is evaluated at each step, and the
evader applies the bang–bang command~\eqref{eq:ts_u_star}.  
In the RTS case, the evader alternates between $\pm u_{T}^{\max}$, with switching times drawn from an exponential distribution of rate $1/3~\mathrm{s}^{-1}$.
In the Singer case, the evader follows a first–order Gauss–Markov
acceleration process $\dot{a}_{T}=-(1/\tau)a_{T}+w_{a}$ with
$\tau=1~\mathrm{s}$ and $\sigma_{a}=u_{T}^{\max}/2$, truncated to the
control bounds.  In the weaving case, the evader executes a
deterministic sinusoidal maneuver $a_{T}(t)=u_{T}^{\max}\sin(\pi t + \pi/2)$,
producing continuous oscillatory motion.

Figure~\ref{fig:ts_uT_profiles} illustrates the evader acceleration profiles for a representative engagement under all four strategies. \rev{In this figure, the TSE, RTS, Singer, and weaving strategies correspond to the blue, red, magenta, and black lines, respectively.}
For this scenario, the resulting miss distances are $6.29~\mathrm{m}$ for TSE, $5.04~\mathrm{m}$ for RTS, $1.93~\mathrm{m}$ for Singer, and $0.63~\mathrm{m}$ for weaving. 
These differences can be directly related to the qualitative behavior of the evasion laws. 
The bang-bang nature of the TSE law, the randomized sign changes produced by RTS, the correlated stochastic fluctuations generated by the Singer model, and the periodic structure of the weaving maneuver are clearly visible. 
Although TSE, RTS, and weaving all begin and end at the same acceleration values, their switching patterns are fundamentally different. 
In particular, both the presence of switching and the timing of the switch instants play a critical role in shaping the terminal lateral displacement. 
The TSE strategy places its switches at advantageous stages of the engagement to amplify the miss distance, whereas RTS switches at statistically random times and therefore achieves a smaller displacement despite using the same control bounds.

\begin{figure}[t]
  \centering
  \includegraphics[width=\figwidth]{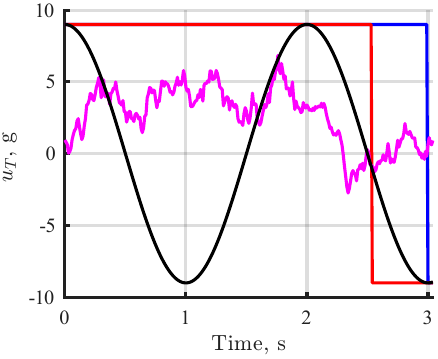}
  \caption{\rev{Evader acceleration profiles for a representative engagement.}}
  \label{fig:ts_uT_profiles}
\end{figure}

Figure~\ref{fig:ts_mc_cdf} compares the empirical CDFs of $|\xi(t^f)|$
across the four strategies ensemble of $10{,}000$ trials.  \rev{The curves
use the same color coding as in Fig.~\ref{fig:ts_uT_profiles}.}  A larger
miss distance indicates improved evasion performance; therefore,
higher values correspond to better outcomes for the evader.  The two
bang-bang profiles, TSE and RTS, clearly outperform the stochastic
smooth maneuvers (Singer and weaving), achieving higher miss distances
across nearly the entire distribution. Notably, TSE consistently
dominates RTS, producing larger evasive separations for all quantiles.
For example, suppose the interceptor is equipped with a warhead having
a lethality radius of 1~m. In that case, its single-shot kill
probability (SSKP) against a target using either the Singer or weaving
evasion maneuvers is about 0.9. In contradistinction, its SSKP against
a target employing the bang-bang RTS evasion model is 0.4, and its
SSKP against a target employing the TSE evasion maneuver is only 0.2
(requiring more than 10 independent interceptors to guarantee a
kill probability of 0.9).
\begin{figure}[t]
  \centering
  \includegraphics[width=\figwidth]{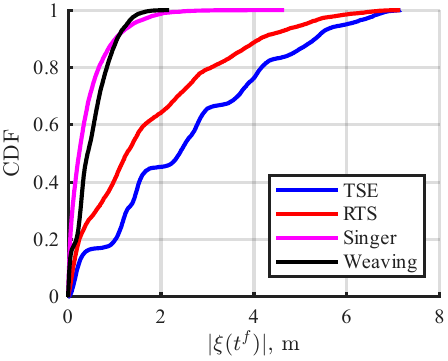}
  \caption{\rev{Empirical CDF of the miss distance over 10,000 Monte Carlo trials.}}
  \label{fig:ts_mc_cdf}
\end{figure}

Table~\ref{tab:ts_mc_stats} reports summary statistics including the mean, median, and the 5th, 20th, 80th, and 95th percentiles. 
The TSE yields the highest mean and median final miss, along with the widest spread across the 5th to 95th percentiles. This confirms that TSE not only preserves extremality but also surpasses traditional stochastic maneuver models in effectiveness under uncertainty.

\begin{table}[t]
  \centering
  \caption{\rev{Miss-distance $|\xi(t^f)|$ summary statistics in~m for each evasion strategy.}}
  \label{tab:ts_mc_stats}
  \begin{tabular}{lcccccc}
    \hline
    Strategy & Mean & Median & P5 & P20 & P80 & P95 \\
    \hline
    TSE            & 2.55 & 2.4  & 0.13 & 1.02 & 4.16 & 5.98 \\
    RTS            & 1.75 & 1.26 & 0.06 & 0.24 & 3.08 & 5.08 \\
    Singer         & 0.4  & 0.24 & 0.01 & 0.05 & 0.66 & 1.38 \\
    Weaving        & 0.51 & 0.42 & 0.01 & 0.18 & 0.83 & 1.26 \\
    \hline
  \end{tabular}
\end{table}

\section{Conclusions}
\label{sec:conc}
This work addresses the problem of optimal evasion from an interceptor
guided by a linear feedback law in the presence of uncertainty and
bounded controls.  Extending classical results from deterministic to
stochastic settings, we prove that, under the GST guidelines, there
always exists an optimal evasion policy with a bang-bang structure.
This structural result enables a dramatic reduction in the search
space for optimal strategies: rather than optimizing over an
infinite-dimensional input space, it suffices to consider
binary-valued inputs at each time step.

Although this result reduces the formal dimensionality of the problem,
solving for the optimal strategy remains computationally challenging
due to the exponential growth in the number of binary control
sequences with the planning horizon.  To mitigate this, we introduce
terminal-set-based evasion (TSE), a tractable closed-loop method that
exploits the bang-bang structure in a stochastic decision-making
framework.  We evaluate the TSE strategy against three widely used
stochastic evasion models: random telegraph signals, Singer processes,
and weaving.  In comparison, the TSE demonstrates its effectiveness by
generating the largest miss distances.  MC simulation results provide
quantitative evidence of the method’s effectiveness and confirm that
the proposed evasion law achieves the intended improvements in evasion
performance.

\rev{The present formulation assumes a known terminal-time PMF $p_{f}$, and
the TSE law hedges against intercept-time uncertainty only in the
mean-square sense, as implied by the cost~\eqref{eq:J}. A promising direction
for future work is to characterize the sensitivity of the TSE law, and of
the RTS, Singer, and weaving benchmarks, to misspecification of the
intercept-time distribution, and to build in explicit robustness to it, for
example through a chance-constrained or worst-case reformulation that trades
expected miss distance for an improved miss probability~\cite{rusnak_optimal_2000}.}

\bibliography{Bib1}

\end{document}